\newcolumntype{C}{>{\setlength\hsize{1\hsize}\centering}X}
\newcolumntype{R}{>{\raggedleft\arraybackslash}X}
\title{A Sidetrack-Based Algorithm For Finding the $k$ Shortest Simple Paths in a Directed Graph}
\author{Denis Kurz, Petra Mutzel}
\institute{Department of Computer Science, TU Dortmund, Germany\\
\email{\{denis.kurz,petra.mutzel\}@tu-dortmund.de}}
\newcommand{\keywords}[1]{\par\addvspace\baselineskip
\noindent\keywordname\enspace\ignorespaces#1}
\newcommand{\dev}{\text{dev}}
\newcommand{\dist}{d}
\newcommand{\pref}{\text{pref}}
\newcommand{\suff}{\text{suff}}
\newcommand{\norm}[1]{\left|#1\right|}
\begin{document}
\def\sectionautorefname{Section}
\def\figureautorefname{Figure}
\def\tableautorefname{Table}
\def\theoremautorefname{Theorem}

\mainmatter
\maketitle

\begin{abstract}
We present an algorithm for the $k$ shortest simple path problem on weighted directed graphs ($k$SSP) that is based on Eppstein's algorithm for a similar problem in which paths are allowed to contain cycles.
In contrast to most other algorithms for $k$SSP, ours is not based on Yen's algorithm \cite{CUSTOM:journals/networks/Yen71} and does not solve replacement path problems.
Its worst-case running time is on par with state-of-the-art algorithms for $k$SSP.
Using our algorithm, one may find $\mathcal O(m)$ simple paths with a single shortest path tree computation and $\mathcal O(n + m)$ additional time per path in well-behaved cases, where $n$ is the number of nodes and $m$ is the number of edges.
Our computational results show that on random graphs and large road networks, these well-behaved cases are quite common and our algorithm is faster than existing algorithms by an order of magnitude.
Further, the running time is far better predictable due to very small dispersion.
\keywords{directed graph, $k$-best, shortest path, simple path, weighted graph}
\end{abstract}

\section{Introduction}
\label{sec:introduction}

The \emph{$k$ shortest path problem} in weighted, directed graphs ($k$SP) asks for a set of $k$ paths from a \emph{source} $s$ to a target \emph{target} $t$ in a graph with $n$ nodes and $m$ edges.
Every path that is not output by an algorithm should be at least as long as any path in the output.
Algorithms for this problem can be useful tools when it is hard to specify constraints that a solution should satisfy.
Instead of computing only one shortest path, $k$SP algorithms generate $k$ paths, and the user can then pick the one that suits their needs best.
The best known algorithm for this problem runs in time $\mathcal O(m + n \log n + k \log k)$ and is due to Eppstein~\cite{DBLP:journals/siamcomp/Eppstein98}.
In the initialization phase, the algorithm builds a data structure that contains information about all $s$-$t$ paths and how they interrelate with each other, in time $\mathcal O(m + n \log n)$.
This can even be reduced to $\mathcal O(m + n)$ if the shortest path tree (SP tree) is given in the input or if the SP tree can be computed in time $\mathcal O(m + n)$.
In the enumeration phase, a \emph{path graph} is constructed.
The path graph is a quaternary min-heap where every path starting in the root correlates to an $s$-$t$ path in the original graph.
We require $\mathcal O(k \log k)$ time for the enumeration phase if we want the output paths to be ordered by length.
If the order in which the paths are output does not matter, Frederickson's heap selection algorithm~\cite{DBLP:journals/iandc/Frederickson93} can be used to enumerate the paths after the initialization phase in time $\mathcal O(k)$.

The \emph{$k$ shortest simple path problem} ($k$SSP), introduced by Clarke, Krikorian and Schwartz~\cite{CUSTOM:journals/jsiam/ClarkeKR63}, seems to be more expensive, computationally.
In contrast to $k$SP, the computed paths are required to be simple, i.e., they must not contain a cycle.
The extra effort may be well-invested if many of the $k$ shortest paths are non-simple and we are only interested in simple paths.
The algorithm by Yen~\cite{CUSTOM:journals/networks/Yen71} used to have the best theoretical worst-case running time of $\mathcal O(kn(m + n \log n))$ for quite some time.
Gotthilf and Lewenstein~\cite{DBLP:journals/ipl/GotthilfL09} improved upon this bound recently.
They observed that $k$SSP can be solved by solving $\mathcal O(k)$ all pairs shortest path (APSP) instances.
Using the APSP algorithm by Pettie~\cite{DBLP:journals/tcs/Pettie04}, they obtain a new upper bound of $\mathcal O(kn(m + n \log \log n))$.
Vassilevska Williams and Williams~\cite{DBLP:conf/focs/WilliamsW10} showed that, for constant $k$, an algorithm for $k$SSP with running time $\mathcal O(n^{3-\varepsilon})$ for some positive $\varepsilon$ (\emph{truly subcubic}) would also yield algorithms with truly subcubic running times for some other problems, including APSP.
A recent survey of the field is due to Eppstein~\cite{CUSTOM:arxiv/Eppstein14}.
The $k$SSP on undirected graphs seems to be significantly easier.
Katoh et al.~\cite{DBLP:journals/networks/KatohIM82} proposed an algorithm that solves $k$SSP on undirected graphs in time $\mathcal O(k(m + n \log n))$.

A subproblem occurring in Yen's algorithm is the (restricted) \emph{replacement path problem}.
Given a shortest $s$-$t$ path $p$ in a graph, it asks for a set of paths as follows.
For each $i < \norm{p}$, the set has to include a shortest simple path that uses the first $i-1$ edges of $p$, but not the $i$th.
This problem has to be solved $\mathcal O(k)$ times to find the $k$ shortest simple paths using Yen's algorithm.
In the original version of Yen's algorithm, the replacement paths are found using $\mathcal O(\norm{p})$ shortest path computations, resulting in time $\mathcal O(n(m + n \log n))$.
Hershberger et al.~\cite{DBLP:journals/talg/HershbergerMS07} compute one SP tree rooted in $s$ and one reversed SP tree rooted in $t$, respectively.
They use these two trees to find a replacement path in constant time per edge on $p$, cutting down the time required to find all replacement paths to $\mathcal O(m + n \log n)$ when Dijkstra's algorithm is used.
However, the paths generated this way are not guaranteed to be simple.
Such non-simple paths can be detected in constant time and repaired by falling back to Yen's replacement path computation for the path edge in question.
Since they do not provide an upper bound for the number of non-simple paths that may occur using this method, the worst-case running time is again $\mathcal O(n(m + n \log n))$.

Some approaches reuse one fixed reversed SP tree $T_0$ rooted in $t$ and computed during the initialization of their $k$SSP algorithm, in contrast to $\mathcal O(1)$ SP trees per replacement path instance.
Pascoal~\cite{Pascoal06implementationsand} noticed that the replacement path that deviates from $p$ at node $v$ might be one that uses an edge $(v, w)$ to an unused successor $w$ of $v$ and then follows the path from $w$ to $t$ in $T_0$.
Therefore, they test whether the shortest such path is simple, and fall back to a full shortest path computation if it is not.
Although they do not describe in detail how this check is done, it can be done in time $\mathcal O(m + n)$ per replacement path instance by partitioning the nodes into blocks as described by Hershberger et al.~\cite{DBLP:journals/talg/HershbergerMS07}.
Feng~\cite{DBLP:journals/networks/Feng14} uses the reversed SP tree to partition $V$ into three classes.
For each edge $(u, v)$ on $p$ for which we want to compute a replacement path, \emph{red} nodes have already been used to reach $v$ via $p$.
A \emph{yellow} node $v$ is a non-red upstream node of some red node in $T_0$, i.e., the path from $v$ to $t$ in $T_0$ contains a red node.
All other nodes are \emph{green}.
They then do shortest path computations from $v$ using Dijkstra's algorithm like Yen.
However, they are able to restrict the search to yellow nodes, resulting in a significantly smaller search space.
Feng does not provide upper bounds on the size of this search space, resulting again in a worst-case running time of $\mathcal O(n(m + n \log n))$ for each replacement path instance.

\emph{Our contribution.}
We propose an algorithm that was derived from Eppstein's notion of a path graph~\cite{DBLP:journals/siamcomp/Eppstein98}.
Our algorithm achieves the same worst-case running time as Yen's algorithm.
Like Yen, we rely on shortest path (tree) computations.
In contrast to Yen-based algorithms, however, our algorithm may draw $\mathcal O(m)$ simple paths from one shortest path tree computation.
If the underlying graph is acyclic, the revised algorithm at the end of this paper requires $\mathcal O(n \log n + k(m + n))$ without further modifications.
Alternatively, one could test whether the graph is acyclic and then use Eppstein's algorithm.
However, this method fails if the graph has just a single cycle, in which case our algorithm appears to be a good choice.
Our algorithm works on multigraphs without modification.
This is also true for every other $k$SSP algorithm we know of.
After some definitions in~\autoref{sec:definitions}, we propose a simplified version of our algorithm with running time $\mathcal O(km(m + n \log n))$ in~\autoref{sec:simple-algo}.
In~\autoref{sec:improve}, we show how this running time can be reduced to $\mathcal O(kn(m + n \log n))$, and how to reduce the number of shortest path tree computations in practice.
Finally, we present the results of our computational studies in~\autoref{sec:experiments} to prove the efficiency of our algorithm.

\section{Definitions}
\label{sec:definitions}

Let $G = (V, E)$ be a directed graph with \emph{node} set $V$ and \emph{edge} set $E$.
Let $s, t \in V$ be \emph{source} and \emph{target} nodes, respectively.
We assume an implicit edge weight function $c: E \to \mathbb R^+_0$ throughout this paper.
We denote the number of nodes $\norm V$ by $n$ and the number of edges $\norm E$ by $m$.
A \emph{path} connecting $v$ to $w$ in $G$, or $v$-$w$ path, is an edge sequence $p = (e_1, e_2, \ldots, e_r)$, $e_i = (v_i, w_i)$, with $v = v_1$, $w = w_r$ and $w_i = v_{i + 1}$ for $1 \le i < r$.
For the sake of simplicity, we only consider combinations of $G$, $s$ and $t$ such that there exists an $s$-$v$ path and a $v$-$t$ path in $G$ for every $v \in V$.
A node $u$ is said to be on the path $p$, denoted by $u \in p$, if $u = w$ or $u = v_i$ for some $i$.
If $v_i \neq v_j \neq w$ for $1 \leq i, j \leq r$, $p$ is a \emph{simple} path.
The \emph{prefix} $(e_1, \ldots, e_i)$ is a $v$-$w_i$ path and denoted by $p^i$.
The \emph{length} $c(p)$ of the path $p$ is the sum of edge weights of its edges.
If every $v$-$w$ path is at least as long as $p$, it is called a \emph{shortest $v$-$w$ path}.
We write $G - p$ to denote the induced subgraph $G[\{ v \in V \mid v \notin p\}]$.

The \emph{$k$ shortest simple path problem} ($k$SSP) is an enumeration problem.
Given a directed graph $G = (V, E)$ with source node $s \in V$, target node $t \in V$, edge weights $c$, and some $k \in \mathbb N$, we want to compute a set $P$ comprising $k$ simple paths from $s$ to $t$ in $G$ such that $c(p) \le c(p')$ for every pair $p \in P$, $p' \notin P$ of simple paths.
We obtain the \emph{$k$ shortest path problem} ($k$SP) if we do not require the computed paths to be simple.

A \emph{shortest path tree} (SP tree) $T$ of $G$ is a subtree of $G$ with node set $V$ such that each $v \in V$ has exactly one outgoing edge, which lies on a shortest $v$-$t$ path, or no outgoing edges if no $v$-$t$ path in $G$ exists.
We denote the latter case by $v \notin T$ even if $v$ is in the node set of $T$.
Our algorithm will compute several SP trees, the first of which we call \emph{initial SP tree} $T_0$.
An edge $e \notin T$ is called \emph{sidetrack} w.r.t. $T$; we will omit $T$ in most cases.
For a sidetrack $e = (v, w)$, the \emph{sidetrack cost} $\delta_T(e)$ is defined as $(c(e) + \dist(w)) - \dist(v)$, where $\dist(u)$ is the length of the unique $u$-$t$ path in $T$.
The sidetrack cost is therefore the difference between the length of a shortest $v$-$t$ path and the length of a shortest $v$-$t$ path that starts with $e$.
The sidetrack set $D_T(v)$ of a node $v \in V$ is the set of all sidetracks w.r.t. $T$ with tails on the unique $v$-$t$ path in $T$.
When sidetracks are organized in heaps, we use sidetrack costs for comparison.

Let $p = (e_1, \ldots, e_k)$, $p' = (f_1, \ldots, f_l)$ be two $s$-$t$ paths, and $i^* = \max \{ i \mid e_j = f_j \text{ for } 1 \le j < i \}$.
Then, with respect to $p$, $i^*$ is the \emph{deviation index}, the tail of $e_{i^*}$ is the \emph{deviation node} $\dev(p')$, and $e_{i^*}$ is the \emph{deviation edge} of $p'$.
As is quite usual for $k$SSP algorithms, we will discover paths in a hierarchical fashion: a path $p'$ is added to the candidate set after $p$ was extracted from the candidate set.
In such cases, $p$ is called the \emph{parent path} of $p'$.
When $p$ is omitted, the terms deviation node and edge are w.r.t. the parent path of $p'$.
By removing the deviation edge of $p$ from $p$, $p$ is split into its \emph{prefix path} $\pref(p) := p^{i^*}$ starting in $s$, and its \emph{suffix path} $\suff(p)$ ending in $t$.
The initial $s$-$t$ path $p_0$ in $T_0$ has no parent path and thus no deviation edge.
We define its suffix path to be $p_0$ itself.

We introduce a generalization of Eppstein's representation \cite{DBLP:journals/siamcomp/Eppstein98} for paths.
Eppstein represented paths as sequences of sidetracks, which were all sidetracks w.r.t. the same shortest path tree.
In our representation, every sidetrack $e$ in a sidetrack sequence may be associated with a different shortest path tree $T_e$.
The path represented by a sidetrack sequence $(e_1, \ldots, e_r)$ can then be reconstructed as follows.
Starting in $s$, we follow the initial SP tree $T_0$ until we reach the tail of $e_1$.
After reaching the tail of $e_i$, we traverse $e_i$ and follow $T_{e_i}$ until we reach the tail of $e_{i + 1}$, or, in case $i = r$, until we reach $t$.
Note that Eppstein's representation is the special case where $T_e = T_0$ for each $e$ in a sidetrack sequence, and that both Eppstein's sidetrack sequences and our generalized ones may represent non-simple paths.
The distance from a node $v$ to $t$ in a shortest path tree $T_e$ associated with a sidetrack $e$ is denoted by $\dist_e(v)$.

\section{Basic Algorithm}
\label{sec:simple-algo}

In this section, we propose a rather simple way to enumerate the $k$ shortest simple paths.
We will describe later how to modify this algorithm to achieve our proclaimed running time guarantee.

We initialize an empty priority queue $Q$ that is going to manage candidate paths.
The key of a path in $Q$ is its length.
We compute the initial shortest path tree $T_0$ and push its unique $s$-$t$ path, represented by an empty sidetrack sequence, to $Q$.
We now process the paths in $Q$ in order of increasing length until we found $k$ simple paths.

Let $(e_1, \ldots, e_r)$ be a sidetrack sequence extracted from $Q$, and $p$ the path that is represented by this sequence.
Although the first path that is pushed to $Q$ is always simple, we will eventually push non-simple paths to $Q$, too.
Therefore, we first have to determine whether $p$ is simple in a \emph{pivot step}.
This check can be done by simply walking $p$ and marking every visited node.

We first describe how to handle the simple case.
We start by outputting $p$.
For every sidetrack $e = (u, v)$ with $u \in \suff(p)$, we discover a new path $p'$ represented by the sequence $(e_1, \ldots, e_r, e)$.
We set $\dev(p') = u$, and push $p'$ to $Q$.
The length of $p'$ can easily be computed as $c(p) + \delta_{T_e}(e)$.
If $\dist_{e_r}(v)$ is undefined because $T_{e_r}$ does not contain a $v$-$t$ path, we simply ignore $e$.
By choosing $T_e = T_{e_r}$, we simply reuse the shortest path tree that is also associated with the last sidetrack in the sequence representing $p$.
Note that sidetracks emanating from $t$ can safely be ignored.

\begin{figure}[tb]
\centering
\begin{subfigure}[b]{0.59\textwidth}
\centering
\begin{tikzpicture}[auto=right,every node/.style={minimum size=0.65cm}]
\draw (180:3cm) node[draw,circle] (s) {$s$};
\draw (135:1.73cm) node[draw,circle] (v1) {$v_1$};
\draw (225:1.73cm) node[draw,circle] (v2) {$v_2$};
\draw (45:1.73cm) node[draw,circle] (v3) {$v_3$};
\draw (-45:1.73cm) node[draw,circle] (v4) {$v_4$};
\draw (0:3cm)   node[draw,circle] (t) {$t$};

\draw[->,ultra thick] (s) -- (v1);
\draw[->,dashed] (s) -- node {$a$} (v2);
\draw[->,ultra thick] (v1) -- (v3);
\draw[->,ultra thick] (v2) -- (v1);
\draw[->,dashed] (v2) -- node {$b$} (v4);
\draw[->,dashed] (v3) -- node {$c$} (v2);
\draw[->,ultra thick] (v3) -- (t);
\draw[->,ultra thick] (v4) -- (v3);
\draw[->,dashed] (v4) -- node{$d$} (t);
\end{tikzpicture}
\caption{Example graph}
\label{fig:basic-example-graph}
\end{subfigure}
\begin{subfigure}[b]{0.39\textwidth}
\centering
\begin{tikzpicture}
\draw (0.1875, 0) node (empty) {$()$};
\draw (-1.125,-1.2) node (a) {$(a)$};
\draw (1.5,-1.2) node (c) {$(c)$};
\draw (1.5,-2.4) node (cprime) {$(c')$};
\draw (-2,-2.4) node (ab) {$(a, b)$};
\draw (-0.25,-2.4) node (ac) {$(a, c)$};
\draw (-2.7,-3.6) node (abc) {$(a, b, c)$};
\draw (-1.3,-3.6) node (abd) {$(a, b, d)$};

\draw[->] (empty) -- (a);
\draw[->] (empty) -- (c);
\draw[->] (a) -- (ab);
\draw[->] (a) -- (ac);
\draw[->] (ab) -- (abc);
\draw[->] (ab) -- (abd);
\draw[->] (c) -- (cprime);
\end{tikzpicture}
\caption{Sidetrack sequences}
\label{fig:basic-example-heap}
\end{subfigure}
\caption{Example for the basic algorithm. In \autoref{fig:basic-example-graph}, the thick, solid edges belong to $T_0$. In \autoref{fig:basic-example-heap}, every sidetrack is associated with $T_0$ except for $c'$, which is associated with the SP tree $T_1$ comprising the edges $b$ and $d$. An arrow from sequence $p$ to sequence $p'$ indicates that $p$ is the parent path of $p'$.}
\label{fig:basic-example}
\end{figure}
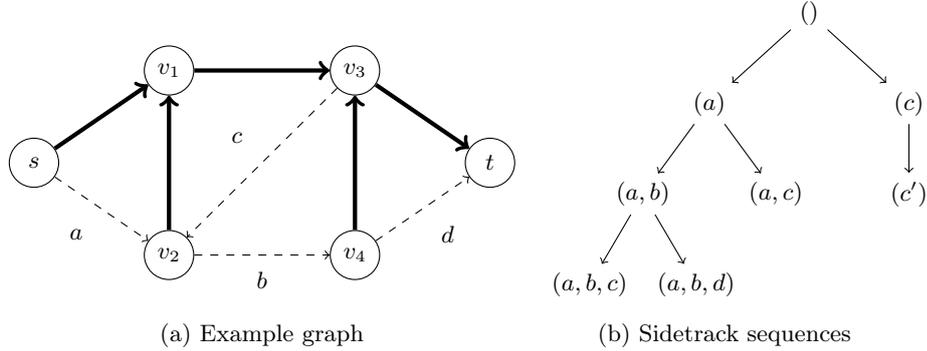

Consider the example in \autoref{fig:basic-example}.
The sidetrack sequence $(a)$ with $T_a = T_0$ represents a simple path $p$ that passes the nodes $s, v_2, v_1, v_3, t$ in this order.
The suffix of this path is its $v_2$-$t$ sub-path, and the sidetracks $b$ and $c$ have tails on this suffix.
Therefore, when $(a)$ is extracted from $Q$, $p$ is output and the sequences $(a, b)$ and $(a, c)$ with $T_a = T_b = T_c$ are pushed to $Q$.

Now assume we extracted a non-simple path $p$ represented by the sidetrack sequence $(e_1, \ldots, e_r)$.
We try to extend the concatenation of the prefix path of $p$ and $e_r$ to a simple $s$-$t$ path.
Let $e_r = (v, w)$.
Any valid extension has to avoid the nodes of $\pref(p)$ after $v$, and we are only interested in shortest extensions.
Therefore, we compute a new SP tree $T$ and distances $\dist$, but in $G - \pref(p)$ instead of $G$ to make sure that nodes of the prefix path of $p$ are not used again.
If $w \notin T$, $\pref(p)$ cannot be extended to a simple $s$-$t$ path, and we simply discard $p$.
Otherwise, we push the sequence $(e_1, \ldots, e_r)$ to $Q$ again.
In this new sequence, however, we associate $T$ with $e_r$ instead of $T_{e_r}$ from the old sequence.
The sequence represents a path $p'$ obtained by concatenating the simple prefix path of $p$, the edge $e_r$, and the $w$-$t$ path in $T$ that, by construction, avoids all nodes of $\pref(p)$.
The suffix itself is simple, too, because it is a shortest path in a subgraph of $G$.
Hence, $p'$ is simple.
The length of this path is $c(\pref(p)) + c(e_r) + \dist(w)$.

Consider again the example in \autoref{fig:basic-example}.
The sidetrack sequence $(a, c)$ with $T_a = T_c = T_0$ represents a non-simple path $p$ that visits the nodes $s$, $v_2$, $v_1$, $v_3$, $v_2$, $v_1$, $v_3$, $t$ in this order.
The deviation node of $p$ is $v_3$, its deviation edge $c$, and its prefix path is $(a, (v_2, v_1), (v_1, v_3))$.
We compute a new SP tree $T$ in $G - \pref(p)$, which only consists of the edge $d$.
Therefore, $T$ does not contain a $v_2$-$t$ path, and $p$ is discarded.

In contrast, assume the sequence $(c)$ with $T_c = T_0$ was just extracted from $Q$.
It represents almost the same path as the sequence above, but it skips the first visit of $v_2$.
Again, $v_3$ is the deviation node and $c$ the deviation edge.
The prefix path comprises the nodes $s$, $v_1$ and $v_3$.
After removing them temporarily, a new shortest path tree $T_1$ is computed, consisting only of the edges $b$ and $d$.
The sequence $(c)$ with $T_c = T_1$ is pushed to $Q$.
This new sequence represents the simple path $((s, v_1), (v_1, v_3), c, b, d)$, i.e., the concatenation of the prefix path of $p$, the last sidetrack $c$ in the extracted sequence, and the unique $v_2$-$t$ path in $T_1$.

Finally, when $(c)$ with $T_c = T_1$ is extracted, the represented path is output.
The sidetracks emanating from its prefix are $(v_2, v_1)$ and $(v_4, v_3)$.
Since $v_1, v_3 \notin T_1$, these sidetracks are ignored and no new path is pushed to $Q$.

\begin{lemma}
	The above algorithm computes the $k$ shortest simple $s$-$t$ paths of a weighted, directed graph $G = (V, E)$.
\end{lemma}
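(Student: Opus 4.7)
The plan is to establish three things: (a) every path the algorithm outputs is a simple $s$-$t$ path; (b) the key stored in $Q$ for each sidetrack sequence equals the length of the path it represents, so the min-priority queue discipline delivers outputs in non-decreasing order of length; and (c) every simple $s$-$t$ path of length below the $k$-th output length is accounted for, either by being output or by being superseded by an equally short simple path that is output in its place. Together, (a), (b), and (c) imply that the first $k$ paths output form a valid set of $k$ shortest simple $s$-$t$ paths.

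Property (a) follows from the pivot step, which explicitly certifies simplicity before emission; in the repair branch, the reinserted sequence represents a path whose prefix $\pref(p)$ was already simple (it is the simple prefix of the extracted sequence) and whose new suffix is a shortest path in $G-\pref(p)$, hence internally vertex-disjoint from the prefix. Property (b) is a short induction on the sequence length, using $\delta_T((v,w)) = c((v,w)) + \dist_T(w) - \dist_T(v)$ to telescope the sidetrack costs against the lengths of the intermediate $T$-paths; for the repair case, the key $c(\pref(p)) + c(e_r) + \dist(w)$ is, by construction, exactly the length of the newly represented simple path.

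The substantive part is (c). I would prove by induction on the number $r$ of edges of a simple $s$-$t$ path $p^*$ that do not lie in $T_0$ the following completeness statement: some sidetrack sequence eventually enters $Q$ with key at most $c(p^*)$, whose subsequent processing, through iterated extraction, child generation, and possibly repair, leads to the emission of a simple $s$-$t$ path of length at most $c(p^*)$. The base case $r=0$ is immediate from $p^* = p_0$, pushed at initialization. For the inductive step, the Eppstein-style parent sequence $\sigma'$ of $p^*$ (obtained by truncating the last non-$T_0$ edge) has fewer non-$T_0$ edges and, by induction, is eventually extracted; if the path it represents is simple, appending $p^*$'s final sidetrack yields a child sequence pushed to $Q$ with key at most $c(p^*)$; if it is non-simple, the repair step computes a shortest completion in $G-\pref(\sigma')$ and reinserts with key equal to the shortest simple $s$-$t$ path length consistent with that prefix, which is still at most $c(p^*)$.

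The main obstacle is the repair case: the reinserted (or subsequently emitted) sequence need not literally represent $p^*$, only a simple $s$-$t$ path no longer than $p^*$. I would absorb this mismatch into the $k$SSP correctness argument by a counting/substitution observation: since only finitely many simple $s$-$t$ paths exist and each is emitted at most once (as it is emitted only when extracted as a simple candidate), the completeness invariant above, combined with the priority-queue ordering from (b), ensures that at the moment the $i$-th simple path is output, all simple $s$-$t$ paths of strictly smaller length have already been emitted. Consequently, after $k$ iterations, the $k$ outputs have the $k$ smallest simple-path lengths, which is exactly what $k$SSP requires.
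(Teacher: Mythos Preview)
Your parts (a) and (b) are fine; the difficulty is entirely in (c), and there the proposed induction does not go through.

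You induct on the number $r$ of non-$T_0$ edges of a simple path $p^*$, and in the step you pass to the Eppstein-style parent sequence $\sigma'=(f_1,\ldots,f_{r-1})$, all sidetracks taken with respect to $T_0$, asserting that ``by induction, $\sigma'$ is eventually extracted''. But your inductive hypothesis is a statement about simple $s$-$t$ paths, not about sequences being extracted, and the path represented by $\sigma'$ need not be simple. More importantly, the algorithm does \emph{not} in general push $\sigma'$ at all: the moment some earlier prefix $(f_1,\ldots,f_j)$ (w.r.t.\ $T_0$) is extracted and found non-simple, it is replaced by $(f_1,\ldots,f_j)$ with a \emph{new} tree $T$, and all subsequent children are formed from sidetracks with respect to $T$, emanating from the $T$-suffix. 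The edge $f_{j+1}$ may be a tree edge of $T$, or its tail may not even lie on that suffix, so $(f_1,\ldots,f_{j+1})$ w.r.t.\ $T_0$ is never created. Hence the $T_0$-decomposition of $p^*$ is the wrong skeleton to induct on for this algorithm, and the step ``appending $p^*$'s final sidetrack yields a child sequence pushed to $Q$'' is unjustified. Your final ``counting/substitution'' paragraph notices a mismatch only at the very last step, but the same mismatch can occur at every level of the induction, and the argument does not show how the algorithm's actual children (w.r.t.\ the changed trees) still cover $p^*$.

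The paper's proof takes a different and shorter route: it invokes the standard Yen-style shortest-deviation argument (the parent of $p^*$ is the shortest already-output simple path sharing a maximal prefix with $p^*$, and the algorithm generates, for each node on that parent's suffix and each outgoing sidetrack, the shortest simple extension), and then isolates the one genuinely new issue, namely timing. Because the forbidden node set for a non-simple candidate $p$ is a proper subset of the forbidden set used for its repaired version $p'$, the suffix of $p$ is no longer than that of $p'$, so $c(p)\le c(p')$; hence $p$ is popped and repaired before $p'$ would need to be output. If you want a self-contained completeness proof, the clean fix is to replace your $T_0$-based induction by this deviation-tree argument (induct on output rank, with parent defined via maximal shared prefix among already-output paths), and then use the paper's monotonicity observation to show the possibly non-simple child is extracted and repaired in time.
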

\begin{proof}
	The algorithm uses the same idea of shortest deviations as existing $k$SSP algorithms or Eppstein's $k$SP algorithm.
	We only have to show that a non-simple path $p$ is processed before its simple enhancement $p'$, resulting from the suffix repair in the non-simple case, is actually needed.
	The set of nodes that are forbidden when the SP tree for $p$ is computed is a proper subset of the node set that the SP tree for $p'$ may not use.
	The suffix of $p$ is therefore not longer than that of $p'$, and $p$ is extracted from $Q$ (and subsequently, $p'$ is pushed) before we need to extract $p'$.
	\qed
\end{proof}

In terms of running time, the above algorithm requires too many computations of SP trees.

\begin{lemma}
	The running time of the above algorithm is $\mathcal O(km(m + n \log n))$.
\end{lemma}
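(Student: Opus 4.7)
My plan is to separately count the \emph{simple} and \emph{non-simple} extractions from $Q$, bound the per-extraction work, and then combine. The key observation that enables a finite bound is that non-simple sidetrack sequences can only enter $Q$ as extensions produced by a \emph{simple} extraction: the initial push is simple, and the non-simple branch of the main loop either discards its path or pushes back the very same sidetrack sequence with a new associated SP tree, which (as argued just above) represents a simple path. Since the algorithm stops once $k$ simple paths have been output, at most $k$ simple extractions ever occur. Each such simple extraction pushes at most one candidate per sidetrack with tail on the suffix, hence at most $m$ candidates. Therefore at most $km$ non-simple sequences are ever pushed, giving at most $km$ non-simple extractions and $\mathcal O(km)$ extractions overall.

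Next I would account for the work done per extraction. The pivot step walks the represented path while marking visited nodes; by pigeonhole either $t$ is reached within $n-1$ edges or a repeat is found within $n+1$ edges, so the step costs $\mathcal O(n)$ even when the represented non-simple path is much longer. In the simple case we scan the at most $n$ nodes of the suffix, inspect their outgoing edges in total time $\mathcal O(m)$, and perform $\mathcal O(m)$ priority queue insertions at $\mathcal O(\log(km))$ each. In the non-simple case the dominating cost is the shortest path tree computation in $G-\pref(p)$, which with Dijkstra and a Fibonacci heap is $\mathcal O(m + n \log n)$; the single potential push to $Q$ and the construction of $\pref(p)$ (walking a simple prefix, hence $\mathcal O(n)$) are absorbed.

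Summing, the simple extractions contribute $\mathcal O(k \cdot m \log(km))$ and the non-simple extractions contribute $\mathcal O(km\,(m + n \log n))$; pivoting over all $\mathcal O(km)$ extractions contributes $\mathcal O(kmn)$. The non-simple term dominates, yielding the claimed bound of $\mathcal O(km(m + n \log n))$.

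The main obstacle is the bookkeeping that keeps the non-simple count linear in $km$: one must be careful to verify that non-simple extractions cannot themselves spawn further non-simple candidates (so they do not cascade), and that the pivot step remains $\mathcal O(n)$ despite sidetrack sequences representing paths of potentially many more than $n$ edges. Both issues are handled by appealing, respectively, to the structure of the non-simple branch (which only re-pushes a now-simple sequence) and to early termination of the walk on the first repeated node.
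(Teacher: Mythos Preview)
Your proposal is correct and follows essentially the same argument as the paper: both observe that the parent of every non-simple candidate is simple, so at most $k$ simple extractions each spawn $\mathcal O(m)$ children, giving $\mathcal O(km)$ SP-tree computations that dominate at $\mathcal O(km(m+n\log n))$. Your explicit justification that the pivot step stays $\mathcal O(n)$ via early termination on the first repeated node is a detail the paper glosses over, and your queue bound $\mathcal O(km\log(km))$ differs only cosmetically from the paper's $\mathcal O(k'm\log k')\subset\mathcal O(km^2)$ via a size-bounded double-ended priority queue---both are absorbed by the dominant term.
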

\begin{proof}
	While processing a non-simple path, at most one new path is pushed to $Q$, which is always simple.
	Thus, the parent of a non-simple path is always simple.
    We have to process at most $k$ simple paths, each of which requires $\mathcal O(m + n)$ running time.
	Every simple path may have $\mathcal O(m)$ sidetracks extensions.
	In the worst case, all of them represent non-simple paths, yielding $\mathcal O(km)$ SP tree computations with a total running time of $\mathcal O(km(m + n\log n))$ if Dijkstra's algorithm with Fibonacci heaps is used.
	The running time for the non-simple cases clearly dominates.

	For every subset of $E$, there is at most one permutation of this subset that represents a simple $s$-$t$ path.
    The maximum number of paths enumerated by the algorithm is therefore $k' := \min\{k, 2^m\}$.
    We can limit the size of $Q$ efficiently to $k'$ using a double-ended priority queue~\cite{CUSTOM:book/sahni1999}.
    We push $\mathcal O(k'm)$ paths to $Q$ and extract $\mathcal O(k'm)$ paths from it; both operations require $\mathcal O(\log k')$ time on interval heaps.
    The total time spent on processing $Q$ is then $\mathcal O(k'm\log k') \subset \mathcal O(km^2)$.

    The pivot step requires $\mathcal O(n)$ running time for each of the $\mathcal O(k'm)$ extracted paths.
    \qed
\end{proof}

\section{Improvements}
\label{sec:improve}

We show how the number of SP tree computations can be reduced to $\mathcal O(kn)$ in the worst case, and seemingly even further in practice.

So far, we were only able to bound the number of SP tree computations by $\mathcal O(m)$ for each extracted simple path.
This stems from the fact that there may be $\mathcal O(m)$ sidetracks from such a path, each of them requiring a subsequent SP tree computation in the worst case.

Consider two sidetrack sequences $(e_1, \ldots, e_r, f_1 = (u, v))$, $(e_1, \ldots, e_r, f_2 = (u, w))$ that were added when a path $p$ represented by $(e_1, \ldots, e_r)$ was processed.
Let $p_1$, $p_2$ be the paths represented by these sequences, respectively.
Assume that both sequences represent non-simple paths, and therefore both require a new SP tree.
We assume w.l.o.g. that $p_1$ is extracted from $Q$ before $p_2$.

When $p_1$ is extracted from $Q$, we discover that it contains a cycle.
We then have to compute an SP tree $T$ for the graph $G - p'$, where $p'$ is the $s$-$u$ sub-path of $p$.
We push $(e_1, \ldots, e_r, f_1)$ back to $Q$, updating $T_{f_1} = T$.
When $p_2$ is extracted, the basic algorithm computes an SP tree for the exact same graph.
This computation may be skipped.
We check if an SP tree for this graph has already been computed, and reuse it if it exists.
In our case, we simply push $(e_1, \ldots, e_r, f_2)$ with $T_{f_2} = T$ to $Q$.

We obtain the following result.

\begin{lemma}
    Excluding the time spent on $Q$, the algorithm proposed in~\autoref{sec:simple-algo} in conjunction with SP tree reuse requires $\mathcal O(kn(m + n \log n))$ time to process non-simple paths.
\end{lemma}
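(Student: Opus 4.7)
The plan is to show that, with the reuse mechanism, the algorithm performs $\mathcal O(kn)$ distinct SP tree computations during non-simple path processing, and then to multiply by the cost $\mathcal O(m + n \log n)$ of one Dijkstra run with Fibonacci heaps. Any other non-queue work performed while handling non-simple paths (for example the $\mathcal O(n)$ pivot step on each of the $\mathcal O(km)$ extractions) is then easily dominated by the SP tree cost.

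First I would pin down when an SP tree computation is triggered. By the description in~\autoref{sec:simple-algo}, this happens only when a non-simple sidetrack sequence is extracted from $Q$, and the tree is computed in $G - \pref(p)$, where $p$ is the represented path. With the reuse rule, two extracted non-simple paths that share the same prefix $\pref(p)$ use the same tree, so the number of SP tree computations is at most the number of distinct prefixes appearing on extracted non-simple paths. Next I would parametrize these prefixes. As observed in the running-time analysis of the basic algorithm, the only child that a non-simple path inserts into $Q$ is its repaired (simple) version; hence every non-simple path in $Q$ must have been pushed as a sidetrack-extension child during the processing of a simple parent. Fix such a simple parent $p^\star$ with sequence $(e_1,\ldots,e_r)$, which gets extracted and output. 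Its non-simple children have the form $(e_1,\ldots,e_r, e)$ with $e = (u,v)$ and $u \in \suff(p^\star)$; the prefix of such a child is exactly $\pref(p^\star)$ concatenated with the portion of $\suff(p^\star)$ up to $u$, and is therefore determined by $p^\star$ together with the tail $u$. The number of distinct prefixes contributed by $p^\star$ is thus at most $\norm{\suff(p^\star)} \le n$.

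Summing over the at most $k$ simple paths that can be extracted before termination — counting both children that are simple from the start and repaired versions of non-simple extractions — gives $\mathcal O(kn)$ distinct prefixes and hence $\mathcal O(kn)$ SP tree computations. Multiplying by $\mathcal O(m + n \log n)$ per computation yields the claimed $\mathcal O(kn(m + n \log n))$ bound. The main obstacle I expect is the clean identification of parents: I must justify that every non-simple path that the algorithm extracts really was inserted as a direct sidetrack extension of a simple path, so that counting over simple extractions is complete. This follows once I argue inductively that non-simple paths only produce simple descendants in $Q$, but it is the step that needs the most care in a full write-up.
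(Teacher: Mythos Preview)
Your proposal is correct and follows essentially the same approach as the paper. The paper's proof is considerably terser---it simply asserts that only $\mathcal O(kn)$ of the $\mathcal O(km)$ non-simple sequences trigger an SP tree computation and notes that pointers from each path to its parent and to already-computed trees allow constant-time lookup---whereas you spell out explicitly the parametrization by (simple parent, deviation tail) that justifies the $\mathcal O(kn)$ count; this is exactly the argument the paper has in mind.
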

\begin{proof}
    There are still $\mathcal O(km)$ many sequences in $Q$ that represent non-simple paths, but only $\mathcal O(kn)$ of them trigger an SP tree computation.
    Let $p$ be a non-simple path extracted from $Q$.
    The initial pivot step requires time $\mathcal O(n)$.
    If each path in $Q$ manages a pointer to its parent path as well as a pointer to the SP tree for $G - p'$ for every prefix path $p'$, already computed SP trees can be accessed in constant time.
    \qed
\end{proof}

The total running time of $\mathcal O(km^2)$ spent on $Q$ is now no longer dominated.
Instead of using a priority queue for the candidate paths, we organize all computed paths in a min-heap in the following way.
The shortest path is the root of the min-heap.
Whenever a path $p'$ is computed while a path $p$ is processed, we insert $p'$ into the min-heap as a child of $p$.
\autoref{fig:basic-example-heap} shows an example of such a min-heap.

We want to extract the $km$ smallest elements from this heap using Fredericksons heap selection algorithm~\cite{DBLP:journals/iandc/Frederickson93}.
The heap described above has maximum degree $m$, again yielding a running time of $\mathcal O(km^2)$.
Let $P_p$ be the set of paths found during the processing of $p$.
Instead of inserting every $p' \in P_p$ as a heap child of $p$, we heapify $P_p$ to obtain the heap $H_p$, using the lengths of the paths for keys again.
The root of $H_p$ is then inserted into the global min-heap as a child of $p$.
Note that the parent path of every path in $H_p$ is not its heap parent in $H_p$, but still $p$ itself.

Every simple path $p$ in the min-heap now has at most two heap successors with the same parent path as $p$, and at most one heap successor whose parent is $p$ itself.
Every non-simple path has at most one simple path as heap processor.
The maximum degree of the global min-heap is therefore bounded by three and Frederickson's heap selection can be done in time $\mathcal O(km)$.

\begin{corollary}
    The algorithm proposed in~\autoref{sec:simple-algo} in conjunction with SP tree reuse and Frederickson's heap selection algorithm computes the $k$ shortest simple $s$-$t$ paths of a weighted, directed graph $G = (V, E)$, $s, t \in V$, in $\mathcal O(kn(m + n \log n))$ time.
\end{corollary}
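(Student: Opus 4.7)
The plan is to derive the corollary by combining the correctness of the basic algorithm with a careful accounting of the four sources of work that remain in the improved algorithm.

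First I would argue that correctness is inherited from the first lemma of \autoref{sec:simple-algo}, since neither SP tree reuse nor the replacement of the priority queue by the Frederickson-based extraction scheme changes which candidate paths are produced or the order in which they are emitted. Reusing SP trees is simply a cache over the forbidden-prefix subgraphs $G - p'$ and returns a tree equivalent to the one the basic algorithm would have computed for that same subgraph, so the set of sidetrack sequences ever pushed is unchanged. The two-level heap construction preserves the min-heap property over the same multiset of candidate paths, so Frederickson's selection emits them in nondecreasing order of length, and the first $k$ simple paths extracted are the $k$ shortest.

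Next I would sum the running-time contributions. By the third lemma, the $\mathcal O(kn)$ SP tree computations triggered by non-simple extractions account for $\mathcal O(kn(m + n \log n))$ time in total. At most $\mathcal O(km)$ candidate paths are ever created and extracted, and each extraction begins with a pivot step of cost $\mathcal O(n)$; this contributes a dominated $\mathcal O(kmn)$. Processing a simple extracted path requires walking $\suff(p)$, generating one child per sidetrack emanating from it, and heapifying those children into $H_p$, all in $\mathcal O(m + n)$ time; summed over the $k$ simple outputs this is $\mathcal O(k(m + n))$, again dominated.

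The hard part will be verifying that the global min-heap has maximum degree $3$, which is what makes Frederickson's selection of the $\mathcal O(km)$ smallest elements run in $\mathcal O(km)$. I would split the degree bound into two cases. For a simple path $p$ stored as a node of some $H_q$, where $q$ is the parent path of $p$, the node $p$ has at most two heap children inside $H_q$ (binary-heap structure) plus at most one extra heap child, namely the root of $H_p$ itself; hence out-degree at most $3$. For a non-simple path, there is at most one heap successor, namely the repaired re-push of the same sidetrack sequence with an updated SP tree association. With the degree bound in hand, Frederickson's algorithm contributes $\mathcal O(km)$, and combining all four contributions gives the claimed $\mathcal O(kn(m + n \log n))$ running time.
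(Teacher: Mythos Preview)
Your proof is correct and follows the paper's own line of argument almost exactly: the paper states the corollary without a separate proof, deriving it from Lemma~3 together with the preceding paragraph that bounds the degree of the two-level heap and invokes Frederickson's selection for the $\mathcal O(km)$ term; you have simply spelled out those implicit steps. One small remark: your degree analysis for non-simple paths (``at most one heap successor'') mirrors the paper's slightly loose wording, but strictly speaking a non-simple path sitting inside some $H_q$ also has up to two binary-heap children in $H_q$ in addition to its single repaired child, so its out-degree is at most three as well---the constant-degree conclusion, and hence the $\mathcal O(km)$ bound, is unaffected.
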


We propose two more modifications that do not change the asymptotic worst-case running time.
We provide evidence that these changes make the algorithm faster in practice.

Consider one of the $k$ simple $s$-$t$ paths $p$ represented by sidetracks $(e_1, \ldots, e_r)$ with $e_i = (v_{i - 1}, v_i)$, $s = v_0$ and $t = v_r$.
When $p$ is processed, we push the set $P_p$ of paths to $Q$, with $\norm{P_p} \in \mathcal O(m)$.
The basic algorithm tests for each $p' \in P_p$ if $p'$ is simple in time $\mathcal O(n)$, leading to a total time of $\mathcal O(kmn)$ for these tests.

Let $T = T_{e_r}$.
By removing all $e_i$ from $T$, the SP tree decomposes into a set of trees $T_i$ such that $T_i$ is rooted in $v_i$.
The \emph{block $i$} is the node set of $T_i$.
Observe that the path $p'$ represented by a sequence $(e_1, \ldots, e_r, e)$, $e = (v_i, w)$, with $v_i$, $w$ in block $i$, $j$, respectively, is simple iff $i < j$.
If $i \ge j$, we follow $p$ until we reach $v_i$, traverse $e$ and follow $T$ to reach $v_i$ again.
Otherwise, the first node on $p$ we hit after deviating from it via $e$ is $v_j$.
Since $i < j$, the $v_j$-$t$ subpath of $p$ does not contain $v_i$, and therefore, $p'$ is simple.
The partition of $V$ into blocks can be computed in time $\mathcal O(n)$.
We can then collect all sidetracks deviating from $p$ and check for each of them if their heads belong to a smaller block than their tails in $\mathcal O(m)$ total time.
We store this information along with the corresponding sidetrack sequences in $Q$.
The pivot turn is replaced by a constant time lookup.
All tests for simplicity then require time $\mathcal O(k(m + n))$ instead of $\mathcal O(kmn)$.

Finally, we want to reduce the number of SP tree computations in practice.
Let $p$ be a non-simple path represented by the sequence $(e_1, \ldots, e_r)$ with $e_i = (v_{i - 1}, v_i)$.
After we discover that $p$ is not simple, the basic algorithm computes an SP tree in $G - \pref(p)$.
Only then does the algorithm check if $v_r \in T$.

Obviously, there is a shortest $w$-$t$ path in $G - \pref(p)$, i.e., $w \in T$, iff there is \emph{some} directed path from $w$ to $t$ in $G - \pref(p)$.
Latter can be checked by a much simpler reachability check in time $\mathcal O(m + n)$.
A naive approach checks reachability for every combination of some node $v \in V$ and one of the $\mathcal O(kn)$ nodes on the output simple paths, yielding time $\mathcal O(kn^2(m + n))$.
Of course, for a fixed prefix $p^i$ of some simple path $p$, we can also check reachability in $G - p^i$ in $\mathcal O(m + n)$ time for every node in $G$ simultaneously, and obtain $\mathcal O(kn(m + n))$ total time.

Let $l$ be the number of edges in $p$, and consider sidetracks $(v_{l - 1}, w)$.
To determine whether the SP path computation for this sidetrack is necessary, we have to check whether there is a path in $G^{l - 1}$, where $l$ is the number of edges of $p$, and $G^i := G - p^i$.
We determine reachability in $G^{l - 1}$ by starting a reverse depth-first search from $t$, ignoring every node that lies on $p$.
After this search, $w$-$t$ reachability can be evaluated in constant time per sidetrack $(v_{l - 1}, w)$.
If $w$ turns out to be separated from $t$, the path represented by $(e_1, \ldots, e_r, (v_{l - 1}, w))$ is non-simple and cannot be repaired to a simple path.
In this case, we discard the sidetrack, which in turn cannot trigger an SP tree computation as it is never extracted from $Q$.

After collecting sidetracks emanating from $v_{i + 1}$ on $p$, we continue with sidetracks emanating from the predecessor $v_i$.
We conduct a depth-first search again, this time starting in $v_i$ and reusing the reachability information computed before.
This way, we only process nodes that were unreachable before.
In other words, we solve an incremental series of reachability instances.
This procedure terminates when $v_i$ is the deviation node of $p$, and takes total time $\mathcal O(k(m + n))$.

\section{Experiments}
\label{sec:experiments}

\begin{table}[tb]
\scriptsize
\centering
\begin{tabularx}{1\textwidth}{p{0.8cm}p{0.8cm}|RR|RR|RR|RR|RR}
 & & \multicolumn{2}{c|}{$m=2n$} & \multicolumn{2}{c|}{$m=4n$} & \multicolumn{2}{c|}{$m=10n$} & \multicolumn{2}{c|}{$m=30n$} & \multicolumn{2}{c}{$m=50n$} \\
 \centering $n$ & & Med & $Q_{.9}$ & Med & $Q_{.9}$ & Med & $Q_{.9}$ & Med & $Q_{.9}$ & Med & $Q_{.9}$ \\
\hline
\centering\multirow{3}{*}{2,000} & \centering NC & 0.91 & 2.24 & 0.41 & 1.06 & 0.35 & 1.14 & 0.41 & 1.39 & 1.95 & 3.41 \\
& \centering SB-r  & 0.18 & 0.23 & 0.27 & 0.29 & 0.72 & 0.75 & 3.70 & 3.86 & 8.94 & 9.08 \\
& \centering SB-o  & \textbf{0.09} & \textbf{0.17} & \textbf{0.07} & \textbf{0.09} & \textbf{0.09} & \textbf{0.12} & \textbf{0.11} & \textbf{0.16} & \textbf{0.16} & \textbf{0.29} \\
\hline
\centering\multirow{3}{*}{4,000} & \centering NC & 0.90 & 2.63 & 0.76 & 2.39 & 0.75 & 1.79 & 1.24 & 4.61 & 1.92 & 4.92 \\
& \centering SB-r  & 0.35 & 0.40 & 0.53 & 0.58 & 1.39 & 1.60 & 7.26 & 7.31 & 17.46 & 17.64 \\
& \centering SB-o  & \textbf{0.16} & \textbf{0.21} & \textbf{0.12} & \textbf{0.17} & \textbf{0.13} & \textbf{0.20} & \textbf{0.17} & \textbf{0.22} & \textbf{0.27} & \textbf{0.38} \\
\hline
\centering\multirow{3}{*}{6,000} & \centering NC & 2.99 & 5.88 & 0.47 & 1.53 & 0.65 & 3.06 & 1.93 & 7.14 & 2.07 & 8.37 \\
& \centering SB-r  & 0.54 & 0.61 & 0.81 & 0.83 & 2.11 & 2.18 & 10.95 & 11.08 & 26.53 & 26.70 \\
& \centering SB-o  & \textbf{0.23} & \textbf{0.30} & \textbf{0.21} & \textbf{0.26} & \textbf{0.20} & \textbf{0.27} & \textbf{0.27} & \textbf{0.37} & \textbf{0.31} & \textbf{0.50} \\
\hline
\centering\multirow{3}{*}{8,000} & \centering NC & 1.62 & 7.16 & 0.62 & 2.52 & 1.79 & 4.27 & 3.29 & 9.23 & 2.45 & 8.94 \\
& \centering SB-r  & 0.69 & 0.86 & 1.08 & 1.11 & 2.81 & 2.89 & 14.66 & 15.35 & 34.84 & 35.50 \\
& \centering SB-o  & \textbf{0.28} & \textbf{0.46} & \textbf{0.23} & \textbf{0.26} & \textbf{0.32} & \textbf{0.41} & \textbf{0.34} & \textbf{0.54} & \textbf{0.39} & \textbf{0.47} \\
\hline
\centering\multirow{3}{*}{10,000} & \centering NC & 1.70 & 10.86 & 1.09 & 5.00 & 2.46 & 8.77 & 5.83 & 12.19 & 8.96 & 23.97 \\
& \centering SB-r  & 0.87 & 0.92 & 1.37 & 1.48 & 3.63 & 3.81 & 18.42 & 18.64 & 43.59 & 43.92 \\
& \centering SB-o  & \textbf{0.35} & \textbf{0.40} & \textbf{0.30} & \textbf{0.42} & \textbf{0.33} & \textbf{0.38} & \textbf{0.37} & \textbf{0.44} & \textbf{0.45} & \textbf{0.61} \\
\end{tabularx}
\vspace{2mm}
\caption{Median and 90\% quantile $Q_{.9}$ of running times in seconds on random graphs with $k = 2000$.}
\label{table:random}
\end{table}

To demonstrate the effectiveness of our algorithm, we conducted a series of experiments.
Feng~\cite{DBLP:journals/networks/Feng14} showed recently that their algorithm is the most efficient one in practice.
We therefore only compare our algorithm to Feng's node classification algorithm (NC).
Our implementation of NC does not use express edges and achieves better running times than the implementation of Feng, who appears to have used the same processor as we did.
We implemented two variants of our algorithm, both of which determine simplicity of all sidetracks of a simple path at once by partitioning the nodes into blocks as discussed above.
The first version, SB-r, tries to reduce the number of SP tree computations by solving some reachability problems; the second version, SB-o, spares this measure and is thus more optimistic.
None of the implementations uses Frederickson's heap selection algorithm, resulting in an additional running time $\mathcal O(km\log k)$ for SB-r and SB-o, but not for NC.
For space restrictions, we contented ourselves with two graph classes that Feng used in their experiments, including road graphs that are especially relevant in practice.

We implemented all algorithms in C++, using forward and reverse star representation for directed graphs.
Shortest paths (NC) and SP trees (SB-r, SB-o) are computed using a common implementation of Dijkstra's algorithm; tentative labels are managed by a pairing heap.
Our implementation of Dijkstra's algorithm stops as soon as the label of $t$ is made permanent if only a single pair shortest path is needed, which is essential for NC.
The queue of candidate paths $Q$ is implemented as an interval heap, a form of double-headed priority queues, which allows us to limit its size efficiently to the number of simple paths that have yet to be output.
Special care has to be taken here for SB-r and SB-o since $Q$ also has to manage non-simple paths.
The experiments ran on an Intel Core i7-3770 @ 3.40GHz with 16GB of RAM on a GNU/Gentoo Linux with kernel version \texttt{4.2.5} and TurboBoost turned off.
Source code was compiled using the GNU C++ compiler \texttt{g++-4.9.3} and \texttt{-O3} optimization.

\subsection{Random Graphs}
\label{sec:random-graphs}

\begin{table}[tb]
\scriptsize
\centering
\begin{tabularx}{1\textwidth}{p{0.8cm}p{0.8cm}p{0.8cm}|RR|RR|RR}
 & & & \multicolumn{2}{c|}{$m=2n$} & \multicolumn{2}{c|}{$m=10n$} & \multicolumn{2}{c}{$m=50n$} \\
 \centering $k$ & \centering $n$ & & Dijkstra & Polls & Dijkstra & Polls & Dijkstra & Polls \\
\hline
\centering\multirow{15}{*}{$\num{2000}$} & \centering\multirow{3}{*}{$\num{2000}$} & \centering NC & $\num{22981}$ & $\num{4402186}$ & $\num{14532}$ & $\num{613888}$ & $\num{14512}$ & $\num{2105908}$ \\
 & & \centering SB-r  & $\num{80}$ & $\num{158740}$ & $\num{65}$ & $\num{129426}$ & $\num{44}$ & $\num{87640}$ \\
 & & \centering SB-o  & $\num{82}$ & $\num{162774}$ & $\num{65}$ & $\num{129426}$ & $\num{44}$ & $\num{87640}$ \\
\cline{2-9}
 & \centering\multirow{3}{*}{$\num{4000}$} & \centering NC & $\num{25130}$ & $\num{3284822}$ & $\num{14580}$ & $\num{1465676}$ & $\num{15604}$ & $\num{1404670}$ \\
 & & \centering SB-r  & $\num{44}$ & $\num{177236}$ & $\num{20}$ & $\num{77830}$ & $\num{28}$ & $\num{113790}$ \\
 & & \centering SB-o  & $\num{46}$ & $\num{185205}$ & $\num{20}$ & $\num{77830}$ & $\num{28}$ & $\num{113790}$ \\
\cline{2-9}
 & \centering\multirow{3}{*}{$\num{6000}$} & \centering NC & $\num{26990}$ & $\num{12142440}$ & $\num{16652}$ & $\num{719338}$ & $\num{16444}$ & $\num{1151469}$ \\
 & & \centering SB-r  & $\num{43}$ & $\num{257184}$ & $\num{18}$ & $\num{110849}$ & $\num{21}$ & $\num{125826}$ \\
 & & \centering SB-o  & $\num{44}$ & $\num{263158}$ & $\num{18}$ & $\num{110849}$ & $\num{21}$ & $\num{125826}$ \\
\cline{2-9}
 & \centering\multirow{3}{*}{$\num{8000}$} & \centering NC & $\num{26810}$ & $\num{5478902}$ & $\num{17316}$ & $\num{2418042}$ & $\num{17034}$ & $\num{1109572}$ \\
 & & \centering SB-r  & $\num{25}$ & $\num{199528}$ & $\num{16}$ & $\num{131850}$ & $\num{17}$ & $\num{135853}$ \\
 & & \centering SB-o  & $\num{28}$ & $\num{223446}$ & $\num{16}$ & $\num{131850}$ & $\num{17}$ & $\num{135853}$ \\
\cline{2-9}
 & \centering\multirow{3}{*}{$\num{10000}$} & \centering NC & $\num{26633}$ & $\num{4806279}$ & $\num{17826}$ & $\num{3552732}$ & $\num{18186}$ & $\num{6130731}$ \\
 & & \centering SB-r  & $\num{23}$ & $\num{229591}$ & $\num{15}$ & $\num{149870}$ & $\num{14}$ & $\num{134892}$ \\
 & & \centering SB-o  & $\num{23}$ & $\num{229634}$ & $\num{15}$ & $\num{149870}$ & $\num{14}$ & $\num{134892}$ \\
\end{tabularx}
\vspace{2mm}
\caption{Median number of Dijkstra calls and polls for random graphs.}
\label{table:dijkstras}
\end{table}

We first considered random graphs generated by the \texttt{sprand} generator provided on the website of the ninth DIMACS implementation challenge~\cite{CUSTUM:dimacs9}.
The generator draws at random a fixed amount of edges, possibly resulting in a multigraph.
For each combination of graph size $n \in \{\num{2000}, \num{4000}, \num{6000}, \num{8000}, \num{10000}\}$ and \emph{linear density} $m / n \in \{2, 3, 4, 7, 10, 20, 30, 40, 50\}$, we generated 20 random graphs, and enumerated $k \in \{\num{200}, \num{500}, \num{1000}, \num{2000}\}$ simple paths.

In \autoref{table:random}, the median and 90\% quantile $Q_{.9}$ (90\% of the running times were at most $Q_{.9}$) of execution times for some densities are summarized.
For small densities, we observe that SB-r is faster than NC, but becomes much slower as the density grows.
The running time of SB-r increases by a factor of 50 between $m = 2n$ and $m = 50n$, but only by a factor of 2 for NC.
SB-o is about twice as fast as SB-r for very small densities, and is even more robust against density changes than NC.
SB-o is thus the fastest of the three algorithms for all graph sizes and densities.
Also note the very low dispersion of SB running times.
For NC, the 90\% quantile of the running time is regularly three times the median running time, and even exceeds a factor of 6 for $n = \num{10000}$ and $m = \num{20000}$.
In contrast, this quotient is always much closer to 1 for SB-o and assumes its maximum of 1.87 for $n = \num{2000}$, $m = \num{100000}$.
We can therefore predict the running time of SB-o much more accurate than that of NC.
The $Q_{.9}$ running time of SB-o is still well below the median running time of both SB-r and NC.
The dispersion of SB-r is even lower.

\begin{table}[tb]
\centering
\begin{subtable}[b]{1\textwidth}
\scriptsize
\centering
\begin{tabularx}{0.7\textwidth}{XRRRR}
Graph & NY & BAY & COL & FLA \\
\hline
$n$ & $\num{264345}$ & $\num{321270}$ & $\num{345666}$ & $\num{1070376}$ \\
$m$ & $\num{733846}$ & $\num{800172}$ & $\num{1057066}$ & $\num{2712798}$ \\
\end{tabularx}
\caption{Sizes of four road graphs.}
\label{table:tiger-sizes}
\end{subtable}
\begin{subtable}[b]{1\textwidth}
\scriptsize
\centering
\begin{tabularx}{1\textwidth}{p{0.8cm}p{0.8cm}|RR>{\RaggedLeft}p{1.4cm}|RR>{\RaggedLeft}p{1.5cm}|RR>{\RaggedLeft}p{1.5cm}}
 & & \multicolumn{3}{c|}{$k=100$} & \multicolumn{3}{c|}{$k=200$} & \multicolumn{3}{c}{$k=300$} \\
 & & Med & $Q_{.9}$ & Polls & Med & $Q_{.9}$ & Polls & Med & $Q_{.9}$ & Polls \\
\hline
\centering\multirow{3}{*}{NY} & \centering NC & 2.06 & 12.14 & $\num{3918630}$ & 3.77 & 24.11 & $\num{6969812}$ & 5.40 & 35.17 & $\num{9803818}$ \\
 & \centering SB-r  & 1.38 & 3.60 & $\num{528614}$ & 2.64 & 7.62 & $\num{792628}$ & 3.88 & 11.43 & $\num{1055890}$ \\
 & \centering SB-o  & \textbf{0.55} & \textbf{2.77} & $\num{528614}$ & \textbf{0.97} & \textbf{5.91} & $\num{792628}$ & \textbf{1.38} & \textbf{9.10} & $\num{1055890}$ \\
\hline
\centering\multirow{3}{*}{BAY} & \centering NC & 5.11 & 17.09 & $\num{15215868}$ & 9.27 & 33.81 & $\num{28784851}$ & 13.84 & 49.54 & $\num{38731878}$ \\
 & \centering SB-r  & 1.76 & 8.43 & $\num{963245}$ & 3.15 & 18.52 & $\num{1761843}$ & 4.77 & 28.18 & $\num{1922360}$ \\
 & \centering SB-o  & \textbf{0.79} & \textbf{7.62} & $\num{963245}$ & \textbf{1.49} & \textbf{17.67} & $\num{1761843}$ & \textbf{1.99} & \textbf{24.97} & $\num{1922360}$ \\
\hline
\centering\multirow{3}{*}{COL} & \centering NC & 6.53 & 25.13 & $\num{16459836}$ & 11.65 & 44.08 & $\num{30058602}$ & 15.98 & \textbf{58.83} & $\num{42430752}$ \\
 & \centering SB-r  & 2.10 & \textbf{18.06} & $\num{435666}$ & 4.01 & 38.40 & $\num{435666}$ & 6.00 & 62.28 & $\num{435666}$ \\
 & \centering SB-o  & \textbf{0.80} & 18.43 & $\num{435666}$ & \textbf{1.42} & \textbf{37.98} & $\num{435666}$ & \textbf{2.02} & 60.62 & $\num{435666}$ \\
\hline
\centering\multirow{3}{*}{FLA} & \centering NC & 30.15 & 67.43 & $\num{56950588}$ & 58.13 & 126.24 & $\num{107818950}$ & 83.00 & 188.03 & $\num{151950959}$ \\
 & \centering SB-r  & 5.53 & 9.68 & $\num{1070376}$ & 10.60 & 33.13 & $\num{1070376}$ & 15.73 & 55.29 & $\num{1070376}$ \\
 & \centering SB-o  & \textbf{2.54} & \textbf{6.81} & $\num{1070376}$ & \textbf{4.65} & \textbf{27.72} & $\num{1070376}$ & \textbf{6.78} & \textbf{47.00} & $\num{1070376}$ \\
\end{tabularx}
\caption{Median and 90\% quantile $Q_{.9}$ of running times in seconds, median number of polls.}
\label{table:tiger-times}
\end{subtable}
\vspace{-4mm}
\caption{Sizes and metrics for four large TIGER road graphs.}
\label{fig:tiger}
\end{table}

\autoref{table:dijkstras} shows the median number of Dijkstra calls.
The numbers are relatively stable across the various densities, but the Dijkstra counts for the SB algorithms is orders of magnitudes smaller than the count for the NC algorithm.
Note, however, that SB needs to compute the complete SP tree every time.
In contrast, NC only solves single pair shortest path problems on rather small subgraphs.
We also provide the number of polls, i.e.,  the total number of nodes that were extracted from Dijkstra's priority queue, for comparability.
NC still requires an order of magnitude more polls compared to SB.
Further, there are only small differences in the median number of polls for SB-r and SB-o for $m = 2n$.
Solving incremental reach for each of the $\num{2000}$ output simple paths only reduced the number of SP tree computations by at most three for $n = \num{8000}$, $m = 2n$.
The extra effort involved to reduce SP tree computations ranges from 50\% of the total running time on very sparse graphs to 98\% on very dense graphs, and does not pay off.
Therefore, SB-o is clearly the fastest algorithm on random graphs.

\subsection{Road Graphs}
\label{sec:road-graphs}

We considered road graphs of various areas in the USA called TIGER graphs, again provided by the DIMACS website~\cite{CUSTUM:dimacs9}.
In particular, we used the road networks of New York (NY), the San Francisco Bay Area (BAY), Colorado (COL), and Florida (FLA).
The sizes of these graphs are shown in \autoref{table:tiger-sizes}.
We drew 20 $s$-$t$ pairs at random and enumerated $k \in \{100, 200, 300\}$ paths.

The resulting running times are summarized in \autoref{table:tiger-times}, along with the median number of polls.
The median running time of NC is clearly dominated by both SB variants.
SB-o achieves a minimum speedup around 4 on NY across all values of $k$; on FLA, the speedup is roughly 12.
SB-r takes approximately twice the time of SB-o and is still much faster than NC.
The ratio of $Q_{.9}$ and median running time is worse for SB, but the $Q_{.9}$ time itself of SB-o is still better than that of NC except for 300 paths on COL.
On FLA, the largest graph, the 90\% quantile of SB-o is much better than the median running time of NC.

\bibliography{literature}
\bibliographystyle{splncs03}

\appendix



\begin{sidewaystable}
\scriptsize
\centering
\begin{tabularx}{1\textwidth}{p{0.8cm}p{0.8cm}|RR|RR|RR|RR|RR|RR|RR|RR|RR}
  & & \multicolumn{18}{c}{$k = 500$} \\
  & & \multicolumn{2}{c|}{$m=2n$} & \multicolumn{2}{c|}{$m=3n$} & \multicolumn{2}{c|}{$m=4n$} & \multicolumn{2}{c|}{$m=7n$} & \multicolumn{2}{c}{$m=10n$} & \multicolumn{2}{c}{$m=20n$} & \multicolumn{2}{c}{$m=30n$} & \multicolumn{2}{c}{$m=40n$} & \multicolumn{2}{c}{$m=50n$} \\
\centering $n$ & & Med & $Q_{.9}$ & Med & $Q_{.9}$ & Med & $Q_{.9}$ & Med & $Q_{.9}$ & Med & $Q_{.9}$ & Med & $Q_{.9}$ & Med & $Q_{.9}$ & Med & $Q_{.9}$ & Med & $Q_{.9}$ \\
\hline
\centering\multirow{3}{*}{$\num{2000}$} & \centering NC & 0.22 & 0.56 & 0.14 & 0.29 & 0.10 & 0.27 & 0.06 & 0.20 & 0.08 & 0.27 & 0.11 & 0.39 & 0.10 & 0.35 & 0.23 & 0.54 & 0.47 & 0.84 \\
 & \centering SB-r  & 0.04 & 0.06 & 0.05 & 0.07 & 0.07 & 0.07 & 0.11 & 0.13 & 0.18 & 0.19 & 0.49 & 0.51 & 0.92 & 0.95 & 1.50 & 1.53 & 2.22 & 2.26 \\
 & \centering SB-o  & \textbf{0.02} & \textbf{0.04} & \textbf{0.02} & \textbf{0.03} & \textbf{0.02} & \textbf{0.02} & \textbf{0.02} & \textbf{0.03} & \textbf{0.02} & \textbf{0.03} & \textbf{0.03} & \textbf{0.05} & \textbf{0.03} & \textbf{0.04} & \textbf{0.03} & \textbf{0.04} & \textbf{0.04} & \textbf{0.07} \\
\hline
\centering\multirow{3}{*}{$\num{4000}$} & \centering NC & 0.22 & 0.65 & 0.15 & 0.51 & 0.19 & 0.60 & 0.20 & 0.55 & 0.18 & 0.43 & 0.71 & 1.21 & 0.31 & 1.14 & 0.29 & 1.24 & 0.48 & 1.27 \\
 & \centering SB-r  & 0.09 & 0.10 & 0.11 & 0.13 & 0.13 & 0.15 & 0.24 & 0.26 & 0.35 & 0.38 & 0.96 & 0.99 & 1.81 & 1.83 & 2.96 & 3.60 & 4.34 & 4.74 \\
 & \centering SB-o  & \textbf{0.04} & \textbf{0.04} & \textbf{0.03} & \textbf{0.05} & \textbf{0.03} & \textbf{0.05} & \textbf{0.04} & \textbf{0.06} & \textbf{0.03} & \textbf{0.05} & \textbf{0.04} & \textbf{0.06} & \textbf{0.04} & \textbf{0.06} & \textbf{0.05} & \textbf{0.08} & \textbf{0.05} & \textbf{0.08} \\
\hline
\centering\multirow{3}{*}{$\num{6000}$} & \centering NC & 0.74 & 1.47 & 0.13 & 0.95 & 0.12 & 0.39 & 0.30 & 0.75 & 0.17 & 0.78 & 0.28 & 1.87 & 0.48 & 1.78 & 0.39 & 1.86 & 0.51 & 2.09 \\
 & \centering SB-r  & 0.14 & 0.15 & 0.16 & 0.17 & 0.20 & 0.21 & 0.34 & 0.36 & 0.52 & 0.54 & 1.44 & 1.56 & 2.74 & 2.76 & 4.40 & 4.44 & 6.56 & 6.59 \\
 & \centering SB-o  & \textbf{0.06} & \textbf{0.07} & \textbf{0.05} & \textbf{0.06} & \textbf{0.05} & \textbf{0.06} & \textbf{0.05} & \textbf{0.06} & \textbf{0.05} & \textbf{0.07} & \textbf{0.06} & \textbf{0.11} & \textbf{0.07} & \textbf{0.09} & \textbf{0.07} & \textbf{0.09} & \textbf{0.07} & \textbf{0.11} \\
\hline
\centering\multirow{3}{*}{$\num{8000}$} & \centering NC & 0.40 & 1.78 & 0.51 & 1.21 & 0.15 & 0.62 & 0.27 & 1.53 & 0.44 & 1.06 & 0.36 & 2.06 & 0.82 & 2.20 & 0.76 & 2.56 & 0.62 & 2.46 \\
 & \centering SB-r  & 0.17 & 0.20 & 0.22 & 0.24 & 0.27 & 0.28 & 0.46 & 0.47 & 0.71 & 0.75 & 1.94 & 2.18 & 3.64 & 3.72 & 5.93 & 6.21 & 8.72 & 9.69 \\
 & \centering SB-o  & \textbf{0.07} & \textbf{0.10} & \textbf{0.06} & \textbf{0.09} & \textbf{0.06} & \textbf{0.07} & \textbf{0.06} & \textbf{0.07} & \textbf{0.07} & \textbf{0.09} & \textbf{0.07} & \textbf{0.09} & \textbf{0.08} & \textbf{0.12} & \textbf{0.09} & \textbf{0.13} & \textbf{0.09} & \textbf{0.11} \\
\hline
\centering\multirow{3}{*}{$\num{10000}$} & \centering NC & 0.42 & 2.73 & 0.36 & 1.97 & 0.28 & 1.24 & 0.38 & 2.25 & 0.61 & 2.09 & 0.80 & 2.35 & 1.45 & 3.01 & 1.48 & 3.81 & 2.25 & 5.89 \\
 & \centering SB-r  & 0.21 & 0.23 & 0.27 & 0.31 & 0.34 & 0.38 & 0.58 & 0.60 & 0.91 & 0.99 & 2.44 & 2.67 & 4.61 & 4.62 & 7.42 & 7.59 & 10.88 & 10.93 \\
 & \centering SB-o  & \textbf{0.08} & \textbf{0.10} & \textbf{0.08} & \textbf{0.11} & \textbf{0.07} & \textbf{0.11} & \textbf{0.08} & \textbf{0.09} & \textbf{0.08} & \textbf{0.10} & \textbf{0.10} & \textbf{0.13} & \textbf{0.09} & \textbf{0.12} & \textbf{0.10} & \textbf{0.13} & \textbf{0.11} & \textbf{0.15} \\
\hline
\\
  & & \multicolumn{18}{c}{$k = 2000$} \\
  & & \multicolumn{2}{c|}{$m=2n$} & \multicolumn{2}{c|}{$m=3n$} & \multicolumn{2}{c|}{$m=4n$} & \multicolumn{2}{c|}{$m=7n$} & \multicolumn{2}{c}{$m=10n$} & \multicolumn{2}{c}{$m=20n$} & \multicolumn{2}{c}{$m=30n$} & \multicolumn{2}{c}{$m=40n$} & \multicolumn{2}{c}{$m=50n$} \\
\centering $n$ & & Med & $Q_{.9}$ & Med & $Q_{.9}$ & Med & $Q_{.9}$ & Med & $Q_{.9}$ & Med & $Q_{.9}$ & Med & $Q_{.9}$ & Med & $Q_{.9}$ & Med & $Q_{.9}$ & Med & $Q_{.9}$ \\
\hline
\centering\multirow{3}{*}{$\num{2000}$} & \centering NC & 0.91 & 2.24 & 0.58 & 1.18 & 0.41 & 1.06 & 0.30 & 0.81 & 0.35 & 1.14 & 0.46 & 1.54 & 0.41 & 1.39 & 0.99 & 2.22 & 1.95 & 3.41 \\
 & \centering SB-r  & 0.18 & 0.23 & 0.22 & 0.26 & 0.27 & 0.29 & 0.46 & 0.50 & 0.72 & 0.75 & 1.97 & 2.04 & 3.70 & 3.86 & 6.10 & 6.32 & 8.94 & 9.08 \\
 & \centering SB-o  & \textbf{0.09} & \textbf{0.17} & \textbf{0.07} & \textbf{0.11} & \textbf{0.07} & \textbf{0.09} & \textbf{0.06} & \textbf{0.11} & \textbf{0.09} & \textbf{0.12} & \textbf{0.10} & \textbf{0.17} & \textbf{0.11} & \textbf{0.16} & \textbf{0.14} & \textbf{0.20} & \textbf{0.16} & \textbf{0.29} \\
\hline
\centering\multirow{3}{*}{$\num{4000}$} & \centering NC & 0.90 & 2.63 & 0.58 & 2.04 & 0.76 & 2.39 & 0.72 & 1.78 & 0.75 & 1.79 & 2.88 & 4.78 & 1.24 & 4.61 & 1.19 & 5.00 & 1.92 & 4.92 \\
 & \centering SB-r  & 0.35 & 0.40 & 0.44 & 0.49 & 0.53 & 0.58 & 0.95 & 1.02 & 1.39 & 1.60 & 3.85 & 3.95 & 7.26 & 7.31 & 11.87 & 12.56 & 17.46 & 17.64 \\
 & \centering SB-o  & \textbf{0.16} & \textbf{0.21} & \textbf{0.13} & \textbf{0.19} & \textbf{0.12} & \textbf{0.17} & \textbf{0.16} & \textbf{0.23} & \textbf{0.13} & \textbf{0.20} & \textbf{0.17} & \textbf{0.26} & \textbf{0.17} & \textbf{0.22} & \textbf{0.21} & \textbf{0.34} & \textbf{0.27} & \textbf{0.38} \\
\hline
\centering\multirow{3}{*}{$\num{6000}$} & \centering NC & 2.99 & 5.88 & 0.49 & 3.78 & 0.47 & 1.53 & 1.20 & 2.93 & 0.65 & 3.06 & 1.18 & 7.44 & 1.93 & 7.14 & 1.58 & 7.28 & 2.07 & 8.37 \\
 & \centering SB-r  & 0.54 & 0.61 & 0.66 & 0.68 & 0.81 & 0.83 & 1.36 & 1.45 & 2.11 & 2.18 & 5.79 & 6.46 & 10.95 & 11.08 & 17.62 & 17.84 & 26.53 & 26.70 \\
 & \centering SB-o  & \textbf{0.23} & \textbf{0.30} & \textbf{0.20} & \textbf{0.22} & \textbf{0.21} & \textbf{0.26} & \textbf{0.22} & \textbf{0.27} & \textbf{0.20} & \textbf{0.27} & \textbf{0.22} & \textbf{0.35} & \textbf{0.27} & \textbf{0.37} & \textbf{0.26} & \textbf{0.32} & \textbf{0.31} & \textbf{0.50} \\
\hline
\centering\multirow{3}{*}{$\num{8000}$} & \centering NC & 1.62 & 7.16 & 1.98 & 4.97 & 0.62 & 2.52 & 1.05 & 6.24 & 1.79 & 4.27 & 1.40 & 8.22 & 3.29 & 9.23 & 3.04 & 10.36 & 2.45 & 8.94 \\
 & \centering SB-r  & 0.69 & 0.86 & 0.88 & 0.96 & 1.08 & 1.11 & 1.85 & 1.87 & 2.81 & 2.89 & 7.77 & 8.57 & 14.66 & 15.35 & 23.78 & 24.95 & 34.84 & 35.50 \\
 & \centering SB-o  & \textbf{0.28} & \textbf{0.46} & \textbf{0.26} & \textbf{0.34} & \textbf{0.23} & \textbf{0.26} & \textbf{0.24} & \textbf{0.27} & \textbf{0.32} & \textbf{0.41} & \textbf{0.30} & \textbf{0.34} & \textbf{0.34} & \textbf{0.54} & \textbf{0.38} & \textbf{0.47} & \textbf{0.39} & \textbf{0.47} \\
\hline
\centering\multirow{3}{*}{$\num{10000}$} & \centering NC & 1.70 & 10.86 & 1.56 & 7.75 & 1.09 & 5.00 & 1.56 & 9.06 & 2.46 & 8.77 & 3.18 & 9.55 & 5.83 & 12.19 & 5.92 & 15.18 & 8.96 & 23.97 \\
 & \centering SB-r  & 0.87 & 0.92 & 1.09 & 1.25 & 1.37 & 1.48 & 2.33 & 2.39 & 3.63 & 3.81 & 9.83 & 10.28 & 18.42 & 18.64 & 29.76 & 29.86 & 43.59 & 43.92 \\
 & \centering SB-o  & \textbf{0.35} & \textbf{0.40} & \textbf{0.30} & \textbf{0.47} & \textbf{0.30} & \textbf{0.42} & \textbf{0.30} & \textbf{0.37} & \textbf{0.33} & \textbf{0.38} & \textbf{0.37} & \textbf{0.47} & \textbf{0.37} & \textbf{0.44} & \textbf{0.41} & \textbf{0.49} & \textbf{0.45} & \textbf{0.61} \\
\hline
\\
  & & \multicolumn{18}{c}{$k = 2000$} \\
  & & \multicolumn{2}{c|}{$m=2n$} & \multicolumn{2}{c|}{$m=3n$} & \multicolumn{2}{c|}{$m=4n$} & \multicolumn{2}{c|}{$m=7n$} & \multicolumn{2}{c}{$m=10n$} & \multicolumn{2}{c}{$m=20n$} & \multicolumn{2}{c}{$m=30n$} & \multicolumn{2}{c}{$m=40n$} & \multicolumn{2}{c}{$m=50n$} \\
\centering $n$ & & Med & $Q_{.9}$ & Med & $Q_{.9}$ & Med & $Q_{.9}$ & Med & $Q_{.9}$ & Med & $Q_{.9}$ & Med & $Q_{.9}$ & Med & $Q_{.9}$ & Med & $Q_{.9}$ & Med & $Q_{.9}$ \\
\hline
\centering\multirow{2}{*}{$\num{2000}$} & \centering NC & 0.91 & 2.24 & 0.58 & 1.18 & 0.41 & 1.06 & 0.30 & 0.81 & 0.35 & 1.14 & 0.46 & 1.54 & 0.41 & 1.39 & 0.99 & 2.22 & 1.95 & 3.41 \\
 & \centering SB-o  & \textbf{0.07} & \textbf{0.15} & \textbf{0.04} & \textbf{0.09} & \textbf{0.05} & \textbf{0.07} & \textbf{0.05} & \textbf{0.09} & \textbf{0.08} & \textbf{0.12} & \textbf{0.10} & \textbf{0.16} & \textbf{0.11} & \textbf{0.16} & \textbf{0.15} & \textbf{0.20} & \textbf{0.17} & \textbf{0.29} \\
\hline
\centering\multirow{2}{*}{$\num{4000}$} & \centering NC & 0.90 & 2.63 & 0.58 & 2.04 & 0.76 & 2.39 & 0.72 & 1.78 & 0.75 & 1.79 & 2.88 & 4.78 & 1.24 & 4.61 & 1.19 & 5.00 & 1.92 & 4.92 \\
 & \centering SB-o  & \textbf{0.09} & \textbf{0.13} & \textbf{0.07} & \textbf{0.12} & \textbf{0.06} & \textbf{0.10} & \textbf{0.10} & \textbf{0.18} & \textbf{0.08} & \textbf{0.14} & \textbf{0.14} & \textbf{0.21} & \textbf{0.15} & \textbf{0.18} & \textbf{0.19} & \textbf{0.28} & \textbf{0.22} & \textbf{0.32} \\
\hline
\centering\multirow{2}{*}{$\num{6000}$} & \centering NC & 2.99 & 5.88 & 0.49 & 3.78 & 0.47 & 1.53 & 1.20 & 2.93 & 0.65 & 3.06 & 1.18 & 7.44 & 1.93 & 7.14 & 1.58 & 7.28 & 2.07 & 8.37 \\
 & \centering SB-o  & \textbf{0.15} & \textbf{0.22} & \textbf{0.10} & \textbf{0.16} & \textbf{0.10} & \textbf{0.14} & \textbf{0.13} & \textbf{0.17} & \textbf{0.12} & \textbf{0.19} & \textbf{0.16} & \textbf{0.28} & \textbf{0.21} & \textbf{0.29} & \textbf{0.21} & \textbf{0.26} & \textbf{0.25} & \textbf{0.41} \\
\hline
\centering\multirow{2}{*}{$\num{8000}$} & \centering NC & 1.62 & 7.16 & 1.98 & 4.97 & 0.62 & 2.52 & 1.05 & 6.24 & 1.79 & 4.27 & 1.40 & 8.22 & 3.29 & 9.23 & 3.04 & 10.36 & 2.45 & 8.94 \\
 & \centering SB-o  & \textbf{0.11} & \textbf{0.33} & \textbf{0.12} & \textbf{0.21} & \textbf{0.10} & \textbf{0.14} & \textbf{0.12} & \textbf{0.15} & \textbf{0.13} & \textbf{0.21} & \textbf{0.17} & \textbf{0.23} & \textbf{0.20} & \textbf{0.33} & \textbf{0.26} & \textbf{0.36} & \textbf{0.28} & \textbf{0.35} \\
\hline
\centering\multirow{2}{*}{$\num{10000}$} & \centering NC & 1.70 & 10.86 & 1.56 & 7.75 & 1.09 & 5.00 & 1.56 & 9.06 & 2.46 & 8.77 & 3.18 & 9.55 & 5.83 & 12.19 & 5.92 & 15.18 & 8.96 & 23.97 \\
 & \centering SB-o  & \textbf{0.12} & \textbf{0.20} & \textbf{0.10} & \textbf{0.26} & \textbf{0.10} & \textbf{0.23} & \textbf{0.12} & \textbf{0.17} & \textbf{0.14} & \textbf{0.19} & \textbf{0.19} & \textbf{0.29} & \textbf{0.21} & \textbf{0.29} & \textbf{0.26} & \textbf{0.33} & \textbf{0.33} & \textbf{0.44} \\
    \hline
\end{tabularx}
\caption{Median and 90\% quantile $Q_{.9}$ of running times in seconds of NC, SB-r and SB-o on random graphs of all tested sizes for $k \in \{200, 500, 1000\}$.}
\end{sidewaystable}


\begin{sidewaysfigure}[tb]
\centering
\includegraphics[height=0.65\textwidth]{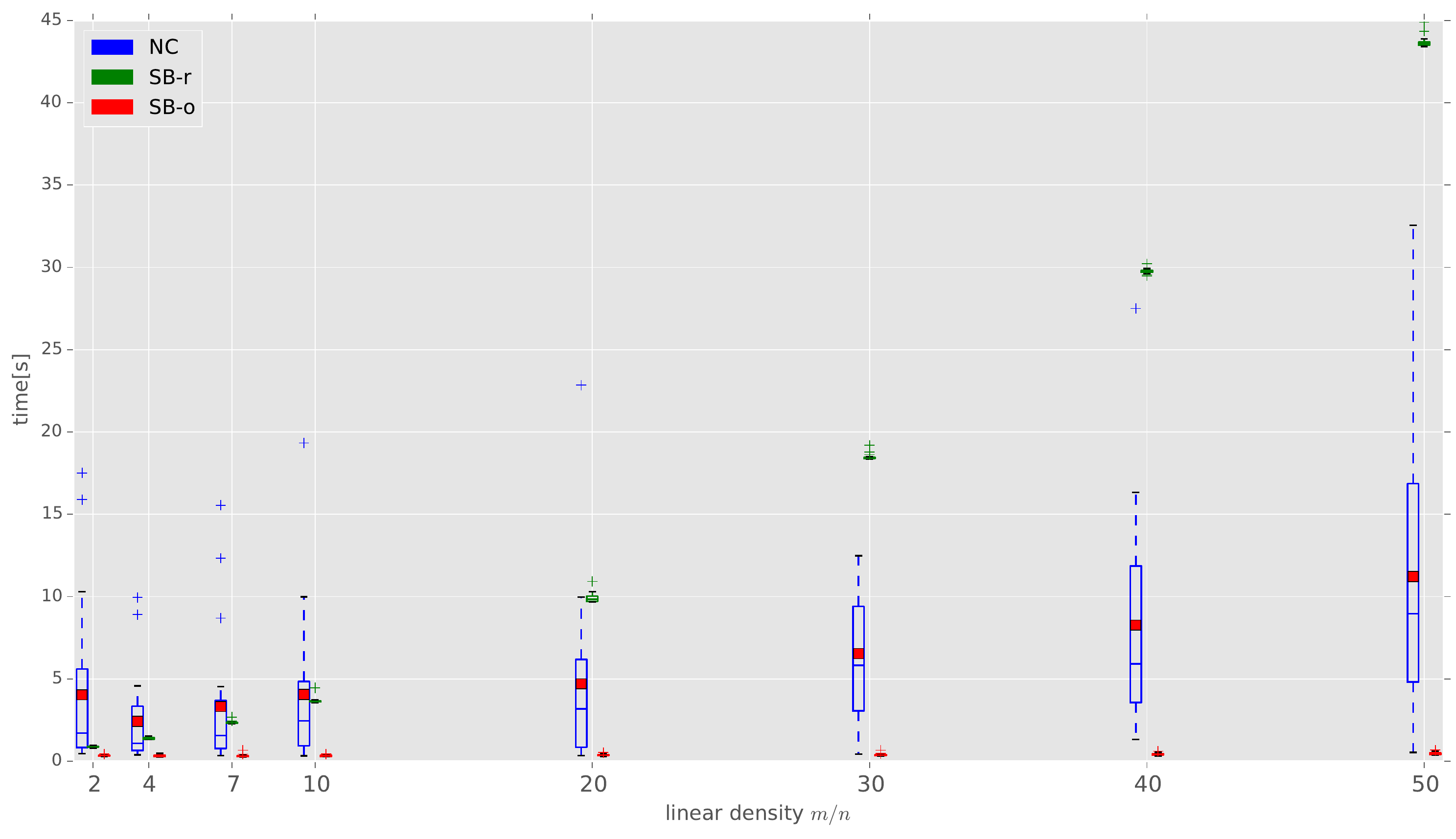}
\caption{Boxplots of running times in seconds for random graphs with $\num{10000}$ nodes and $k = \num{2000}$. Plus signs represent outliers. A red square marks the mean, but was omitted for the SB algorithms as they would completely cover their corresponding boxes.}
\label{fig:sprand-boxplot}
\end{sidewaysfigure}

\begin{sidewaysfigure}[tb]
\centering
\includegraphics[height=0.65\textwidth]{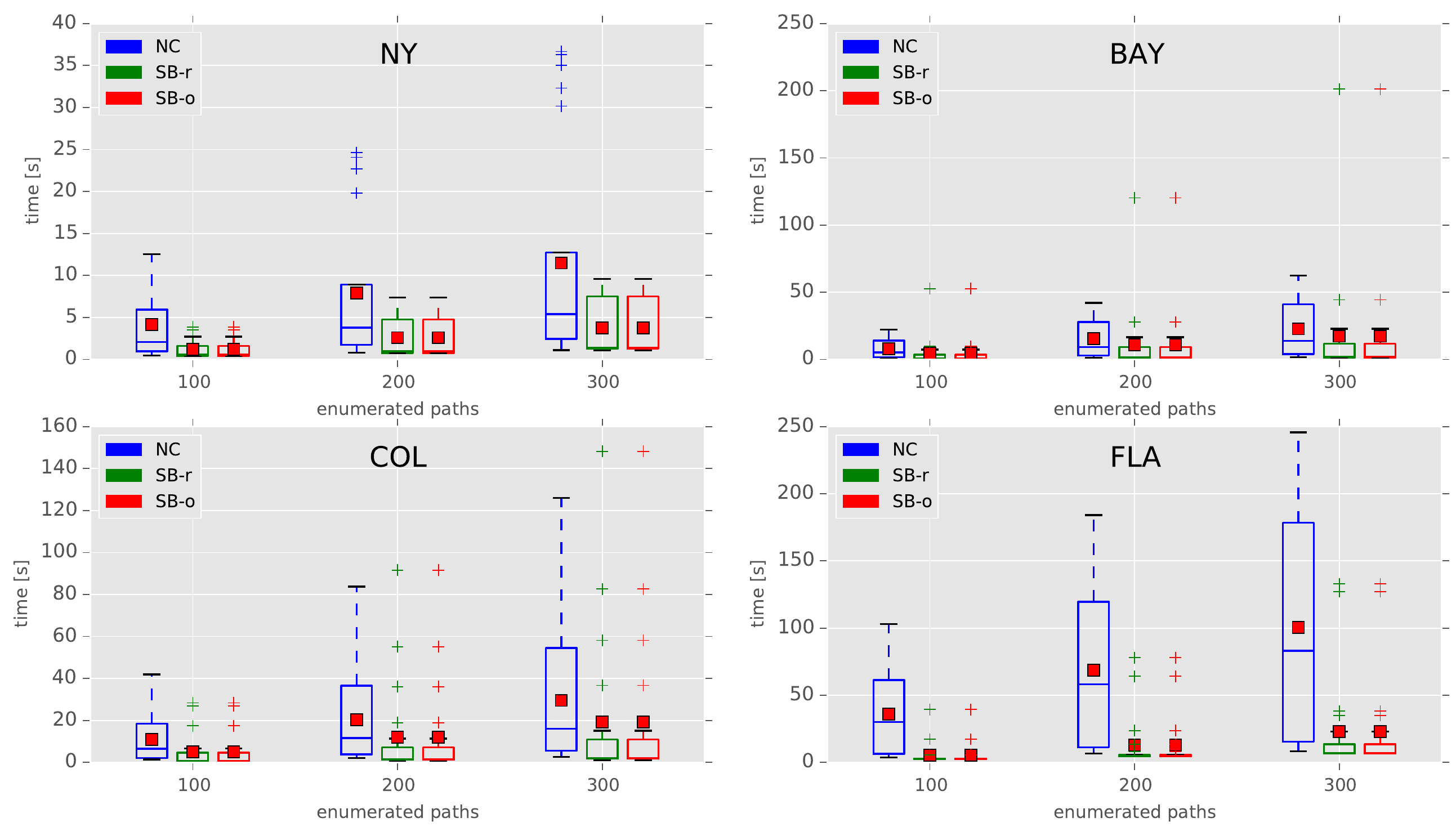}
\caption{Boxplots of running times in seconds for TIGER graphs. Plus signs represent outliers. A red square marks the mean. The interquartile range of SB on FLA for $k \in \{100, 200\}$ is so small that the boxes appear as lines.}
\label{fig:large-cities}
\end{sidewaysfigure}

\end{document}